\documentclass[conference]{IEEEtran}
\IEEEoverridecommandlockouts

\setlength{\columnsep}{0.21in}
\usepackage[margin=0.73in]{geometry}

\usepackage{dblfloatfix}
\usepackage{cite}
\usepackage{multicol}
\usepackage{amsmath}
\usepackage{subcaption}
\usepackage{fancyhdr}
\pagestyle{fancy}
\fancypagestyle{firstpage}{%
  \lhead{This work has been accepted for publication in the IEEE WCNC 2024 Conference.}
}

\usepackage{amssymb}
\usepackage{amsfonts}
\usepackage{bbm}
\usepackage[T1]{fontenc}
\usepackage{algpseudocode}
\usepackage[utf8]{inputenc}
\usepackage[english]{babel}
\usepackage[usenames, dvipsnames]{color}
\usepackage{pifont}
\usepackage{booktabs}
\usepackage{enumitem}
\usepackage{mathtools}
\usepackage{ellipsis}
\usepackage{textcomp}
\usepackage{cite}
\usepackage[pdftex]{graphicx}
\usepackage{subcaption}
\usepackage{epsfig}
\usepackage{epstopdf}
\usepackage{caption}
\usepackage[ruled,vlined,commentsnumbered]{algorithm2e}
\usepackage{fixltx2e}
\usepackage{algpseudocode}
\usepackage{epstopdf}
\usepackage{psfrag}
\usepackage{grffile}
\usepackage{amsthm}
\SetKwInput{KwIn}{Inputs} 

\usepackage{caption}

\newtheorem{thm}{\textbf{Theorem}}

\theoremstyle{definition}

\theoremstyle{remark}

\theoremstyle{plain}

\newcommand{\RNum}[1]{\uppercase\expandafter{\romannumeral #1\relax}}
\usepackage{setspace}
\setlist[itemize]{leftmargin=*}

\def\BibTeX{{\rm B\kern-.05em{\sc i\kern-.025em b}\kern-.08em
    T\kern-.1667em\lower.7ex\hbox{E}\kern-.125emX}}
\begin{document}

\title{Modeling and Performance Analysis of CSMA-Based JCAS Networks
} 
\author{\IEEEauthorblockN{ Navid Keshtiarast, Pradyumna Kumar Bishoyi, and Marina Petrova}
\IEEEauthorblockA{Mobile Communications and Computing, RWTH Aachen University, Aachen, Germany \\
Email: \{navid.keshtiarast, pradyumna.bishoyi, petrova\}@mcc.rwth-aachen.de}
 \vspace{-8mm}
}
\maketitle
\thispagestyle{firstpage}
\begin{abstract}
  Joint communication and sensing (JCAS) networks are envisioned as a key enabler for a variety of applications which demand reliable wireless connectivity along with accurate and robust sensing capability. When sensing and communication share the same spectrum, the communication links in the JCAS networks experience interference from both sensing and communication signals. Therefore, it is crucial to analyze the interference caused by the uncoordinated transmission of either sensing or communication signals, so that effective interference mitigation techniques could be put in place. We consider a JCAS network consisting of dual-functional nodes operating in radar and communication modes. To gain access to the shared communication channel, each node follows carrier sense multiple access (CSMA)-based protocol. For this setting, we study the radar and communication performances defined in terms of maximum unambiguous range and aggregated network throughput, respectively. Leveraging on the stochastic geometry approach, we model the interference of the network and derive a closed-form expression for both radar and communication performance metrics. Finally, we verify our analytical results through extensive simulation.
\end{abstract}

\begin{IEEEkeywords}
JCAS, Sub-7 GHz, IEEE 802.11 bf, CSMA
\end{IEEEkeywords}

\section{Introduction}
\label{sec:Introduction}
With the increasing demand for high-precision localization and sensing-based services, traditional wireless networks are undergoing a major paradigm shift from predominantly communication-oriented networks to the ones that provide ubiquitous connectivity and sensing in the form of joint communication and sensing (JCAS) networks \cite{liu2022,pin2021}. The integration of sensing functionalities could potentially boost the system performance and may assist for mutual benefits, which are not present in traditional communication systems \cite{Chiriyath2016}. 
However, to fully exploit this mutual performance gain, one must carefully analyze the goals and constraints of both communication and sensing functionalities. The main goal of sensing is to extract useful information from a noisy environment, whereas the main objective of communication is to effectively transmit and recover information in a noisy environment \cite{liu2020}. Given the diverse goals, understanding the performance trade-offs and quantifying mutual gains is crucial for co-designing both functionalities to improve the overall system performance.

\textit{Related works}: In the recent past, research efforts have been devoted to analyzing the performance of the JCAS networks under different system assumptions and parameters. In \cite{munari_pimrc}, the authors analyzed the coexistence performance of uncoordinated radars and communication devices using stochastic geometry to explore the impact of network density, packet length, and antenna directivity on radar detection and communication network throughput. 
In \cite{ping_letter}, the coexistence of JCAS nodes operating in 60 GHz frequency is investigated for different densities and antenna directivity. This work was extended in \cite{ping_ICNC}, where the performance of radar and communication is analyzed by studying radar range for different radar update rates and communication network throughput for various network densities. In \cite{Ram2022}, the authors deliberate a generalized analytical framework to get the highest throughput by changing the radar duty cycle and pulse repetition interval. The JCAS in the context of the cellular network is analyzed in \cite{jeff2022}. By using the stochastic geometry approach, the authors derive the upper and lower bounds for sensing and communication coverage probability and ergodic capacity of the JCAS network. 

\textit{In most of the aforementioned works, the authors have considered JCAS-based networks with contention-free channel allocation for the communication and sensing  (i.e. cellular networks operating in the licensed band) or ALOHA random access protocol for communication, which suffers from low throughput. However, many current communication standards, including Wi-Fi and 5G new radio unlicensed (NR-U), use carrier sensing (CS)-based access protocol for channel access in the 5 GHz and 6 GHz bands.} JCAS systems emerging in these bands will have to share the spectrum using CS-based access schemes. Since both sensing and communication share the same spectrum, the sensing performance is coupled with the choice of communication medium access protocol. In addition, sensing echo signals suffer double path loss effect \cite{tait2005}; therefore, in the case of dense JCAS network scenarios, the sensing performance is going to be severely degraded due to interference from the communication signal.

In contrast to the previous works, we consider CS-based protocol for communication and study the performance of both sensing\footnote{In our work, we use the terms radar and sensing interchangeably. More specifically, we consider radar-based sensing, where the radio signal is used to detect the presence of the target and the maximum unambiguous range is used to evaluate the sensing performance.} (radar) and communication functionalities in a JCAS network operating in sub-7 GHz. Since the unlicensed sub-7 GHz band is already home to diverse technologies and emerging cutting-edge applications that necessitate simultaneous sensing of the surroundings and communication among multiple co-located users (e.g. XR gaming, Metaverse applications), it is essential to conduct interference analysis and understand how this affects both the communication and sensing performance. 
The key questions we try to address here are:
\begin{itemize}
    \item \textit{How to obtain maximum detectable range while considering the desirable false alarm rate and medium access probability of communication nodes?} 
    \item \textit{What is the impact of radar transmission on the performance of CS-based communication?}
\end{itemize}

In this paper, we analyze the performance of a JCAS network in which each node switches between radar and communication mode alternatively in an uncoordinated manner. In communication mode, each node follows a CS-based medium access protocol to access the shared channel. We model the locations of the nodes as a Poisson point process (PPP) and derive a closed-form expression for both radar and communication performance metrics. 
To this end, the main contributions of this work are summarized as follows:
\begin{itemize}
\item First, we develop an analytical framework to evaluate the performance of CSMA-based JCAS networks. Specifically, based on the stochastic geometry approach, we model the interference caused by the uncoordinated transmission of either radar pulse or data transmission in a large-scale JCAS network.
\item Thereafter, we analyze the sensing performance in terms of maximum unambiguous range in the presence of constant false alarm rate and communication transmission probability. Moreover, we derive the closed-form expression for the medium access probability and aggregated network throughput density to characterize the communication performance.
\item Finally, the accuracy of the proposed analytical model is validated through the extensive simulation results.
\end{itemize}

This paper is organized as follows: A discussion of the system model and the performance metrics follows in Section II. Section III shows and discusses our analytical and simulation results. In Section IV, we conclude the paper.
\section{System Model}
We consider a JCAS network consisting of dual-functional nodes located according to a homogeneous PPP $\Phi = \{\mathbf{x}_i\} $ with intensity $\lambda$. As shown in Figure \ref{fig:System_model1}\footnote{Some icons taken from https://www.flaticon.com.}, each node supports both sensing and communication functions and acts as a radar and communication node. In radar mode, each node functions as a monostatic radar, acting as both a radar transmitter and receiver. During this mode, the node senses its surrounding by transmitting a radar pulse and waiting for an echo signal. Following the radar mode, the node enters communication mode. We assume that each node in the network has a dedicated receiver and that each node uses a CSMA/CA-based medium access control protocol to compete for the channel. Each node employs its radar and communication mode in a time division manner, as shown in Figure \ref{fig:System_model}. We assume that the time is slotted to facilitate analytical tractability. Let $M=M_r+M_c$ represent the total time slots, where $M_r$ and $M_c$ represent the number of slots used for radar sensing and data transmission purposes, respectively. Each node first starts in radar mode by sending a pulse in one-time slot and waiting for echoes in $M_r-1$ slots. The node immediately switches to communication mode for the remainder of the $M_c$ slot duration. In this mode, the node performs carrier sensing and transmits only when the channel is idle. This cycle is repeated for each node in the network. Clearly, there exists a trade-off in the allocation of time slots for radar and communication modes. The communication performance will suffer as more slots are reserved for the radar mode. To capture this trade-off, we define the parameter $\epsilon=M_r/M$, which represents the fraction of slots dedicated to the radar mode.
\vspace{-0.1cm}
\begin{figure}[t!]
	\centering
	\includegraphics[width=0.44\textwidth,trim = 20mm 80mm 20mm 72mm,clip]{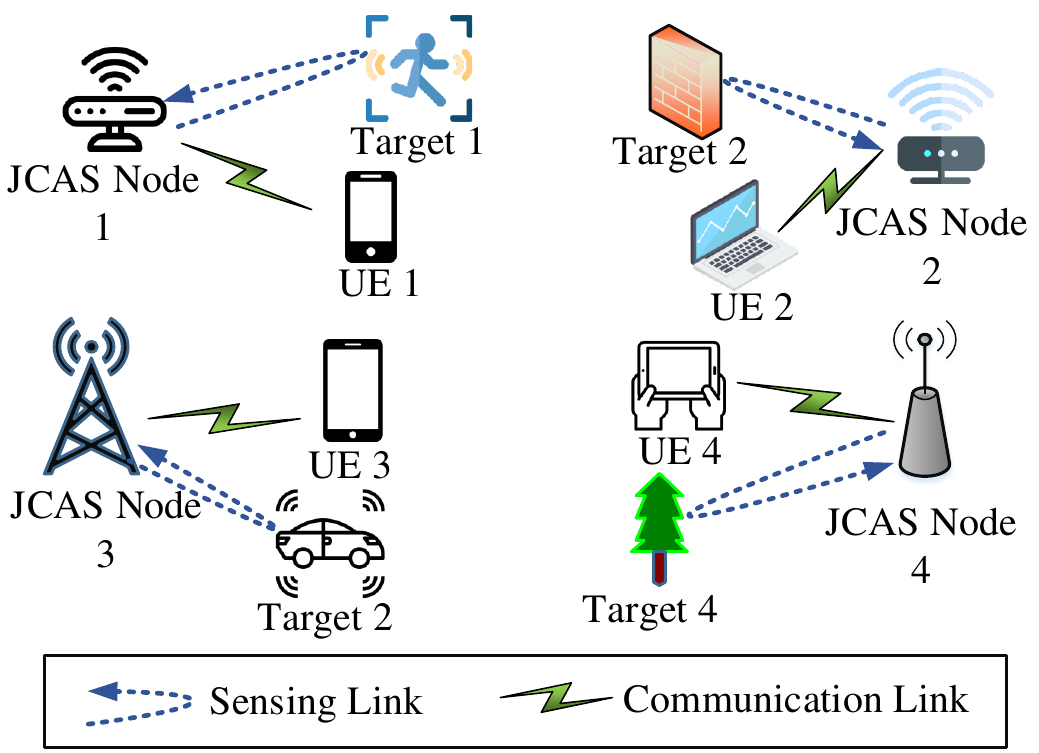}
	\caption{System diagram of JCAS network}
	\label{fig:System_model1}
	\vspace{-0.6cm}
\end{figure}

At any random snapshot of time, one node is either in radar mode or communication mode. Let $\tau$ be the fraction of nodes in radar mode and $(1-\tau)$ be the fraction in communication mode. The set of nodes in radar and communication mode can be realized from independent thinning of original PPP $\Phi$ \cite{haenggi2012stochastic}. By the corresponding property of PPP, $\Phi_r$ is a PPP with intensity $\lambda_r = \tau\lambda$ for the number of nodes in radar mode and  $\Phi_c$ with intensity $\lambda_c = (1-\tau)\lambda$ for communication nodes. The overall network performance is dependent on the value of $\tau$, and we are interested in analyzing radar and communication performance for a given $\tau$. In the following subsections, we explain in detail the radar and communication performance.
\vspace{-0.3cm}
\begin{figure}[h!]
	\centering
	\includegraphics[width=0.5\textwidth,trim = 40mm 102mm 60mm 16mm,clip]{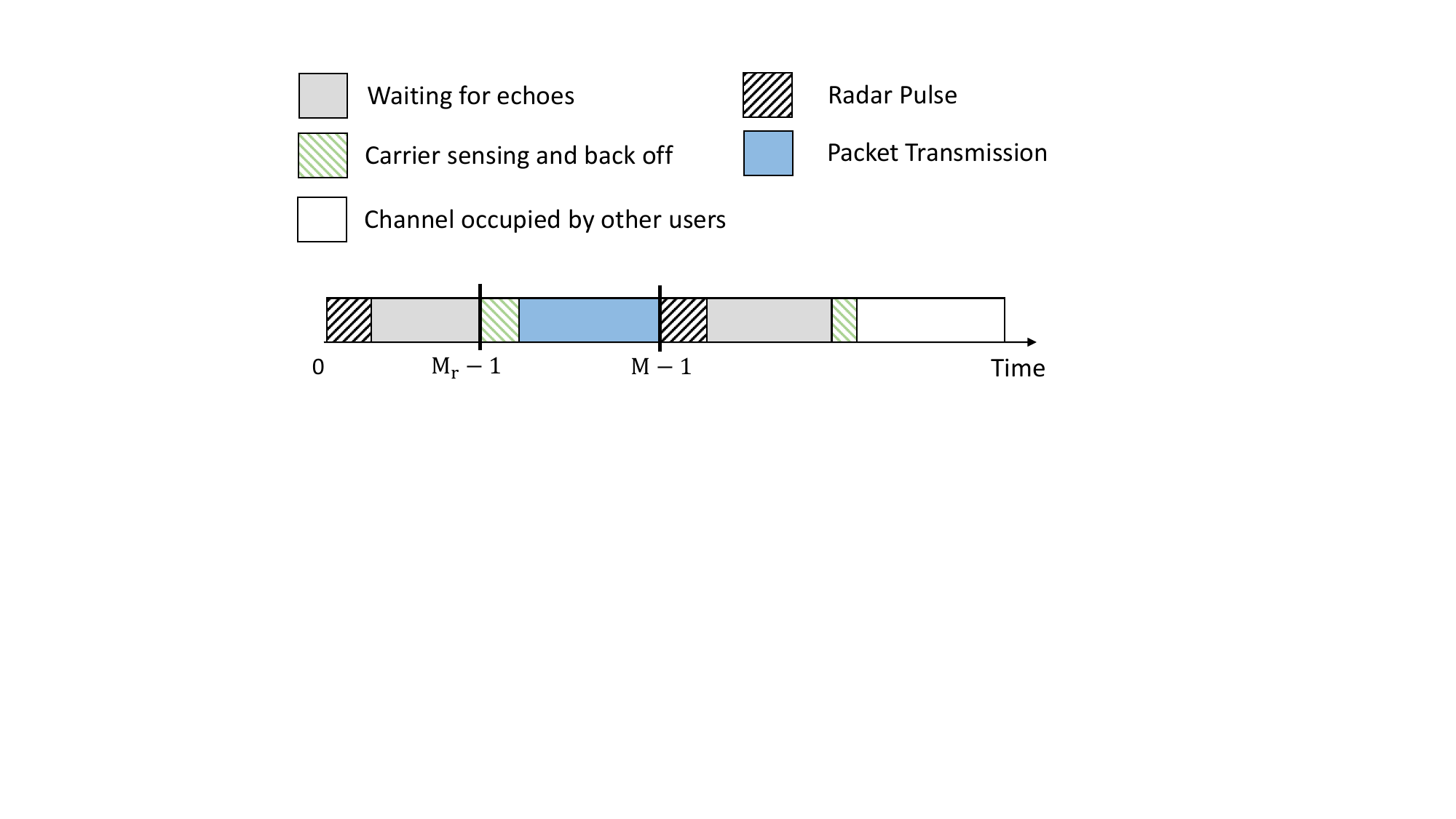}
	\caption{Time diagram of JCAS node over two cycles }
	\label{fig:System_model}
\end{figure}
\vspace{-5mm}
\subsection{Radar Performance}
To evaluate radar performance, we consider the maximum unambiguous range as one of the key performance metrics, which is defined as the maximum range at which a target can be detected reliably. In our scenario, each node in radar mode acts as monostatic radar, and target detection decisions are taken based on the receiving echo power. In classical radar theory \cite{tait2005}, reliable target detection depends on two parameters. First, the probability of detection, $P_D$ is defined as the probability that the echo power is above a certain predefined threshold. Second, the probability of false alarm, $P_{fa}$, which is defined as the interference received at the radar receiver is above a certain threshold. Here, we try to analyze the effect of interference on the maximum unambiguous range of the radar.  
Let $P_{tx}$ be the transmitting power of the node. Then, the expression for received echo power is
\begin{equation}
    P_r = \frac{P_{tx}G^2c^2\sigma_{rcs}}{(4\pi)^3f_c^2d^{4}}
    \vspace{-2mm}
\end{equation}
where $G = 4\pi/\phi^2$ is the antenna gain over the main beam of width $\phi = [0,2\pi]$, $\sigma_{rcs}$ is the radar cross section (RCS) of the target, $c$ is the speed of the light, $f_c$ is the operating frequency, and $d$ is the distance between the radar node and the target. 

In radar mode, each node waits for $M_r-1$ slots for the echo. Since the radar node is surrounded by other radar and communication nodes, interference is inevitable. For this setup, the aggregated interference at the node at slot $m \in M_r-1$ is 
\vspace{-2mm}
\begin{equation}
    I_m = \sum_{i \in \Phi}\kappa P_{tx}G^2 d_i^{-4}
    \vspace{-2mm}
\end{equation}
where $\kappa = \frac{c^2}{(4\pi)^3f_c^2}$ and $d_i$ is the interferer-receiver distance. For the successful detection of a target, the received echo power must be above a certain threshold $\theta$, i.e. $P_D = P(P_r+I_m >\theta)$. Also, in converse, the false alarm is triggered when the interference value is above the threshold $\theta$, i.e. $P_{fa} = P(\bigcup_{m=1}^{M_r-1}I_m >\theta)$.

To evaluate the interference, the probability of collision during the echo waiting duration, i.e. $M_r-1$ slots, has to be calculated. Since considering the joint interference distribution is mathematically difficult, we use the nearest interferer approximation \cite{baccelli2010stochastic}, i.e. the closest neighbour in our analysis. Further, we assume that when the node under investigation is inside the main beam of its neighbouring radar node, it will receive maximum interference. Since the antenna pattern is randomly oriented, the probability of the above event is $\frac{\phi}{2\pi}$ and the intensity of nodes in radar mode is $\lambda_r^{'} = (\phi/2\pi)\lambda_r$. Based on these assumptions, we derive the expression for the maximum unambiguous range of the radar.
\vspace{-0.15cm}

\begin{thm} Given the desired false alarm probability ($P_{fa}$), the maximum unambiguous radar range is
\begin{equation} \label{eqn9}
\begin{aligned}
  d_{rm}=\bigg(\frac{\sigma_{rcs}f_c^2}{4\pi^3c^2}\bigg)^{1/8}\bigg(-\ln{\bigg(1-\frac{P_{fa}}{C(M_r,\epsilon,q_w)}}\bigg)\bigg)^{1/4}
\end{aligned}
\end{equation}
where $ C(M_r,\epsilon,q_w)= 1-\frac{1}{M}-\sum_{i=M_r}^{M-1}{\frac{1}{M}(1-q_w)^{N_i}},$,
$N_i = \min(M_r-1,M_r/\epsilon-i)$, and $q_w$ is the transmitting probability of the nodes in communication mode. 
\end{thm}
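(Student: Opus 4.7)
The plan is to express $P_{fa}$ as a product of a spatial factor and a temporal factor $C(M_r,\epsilon,q_w)$, and then invert the resulting identity to solve for $d_{rm}$. First I would use the nearest-interferer approximation already invoked in the setup: the event $\{I_m>\theta\}$ in a given slot is equivalent to the nearest main-beam interferer lying within a critical distance $r^\star$ that solves $\kappa P_{tx}G^2(r^\star)^{-4}=\theta$. I would fix the detection threshold $\theta$ as the minimum detectable echo power at range $d_{rm}$, so $\theta=\kappa P_{tx}G^2\sigma_{rcs}/d_{rm}^4$, which lets me write $(r^\star)^2$ explicitly as a function of $d_{rm}$, $\sigma_{rcs}$, $c$, and $f_c$. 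Since the main-beam radar interferers form a PPP of intensity $\lambda_r'=(\phi/2\pi)\lambda_r$, the void probability of this PPP then gives $P(D\le r^\star)=1-\exp(-\pi\lambda_r'(r^\star)^2)$, where $D$ denotes the distance to the nearest such interferer.

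Next I would compute the temporal factor $C$, namely the probability that a randomly placed neighbor actually transmits in at least one of our $M_r-1$ listen slots, by conditioning on the neighbor's cycle phase $i$, which is uniform on $\{0,1,\ldots,M-1\}$. Phase $i=0$ always puts a pulse into our first listen slot and contributes the $1/M$ term. Phases $i\in\{M_r,\ldots,M-1\}$ start the neighbor in its communication subphase with $N_i=\min(M_r-1,M-i)$ overlapping comm slots, during which it remains silent with probability $(1-q_w)^{N_i}$ under CSMA. Tracking how the neighbor's listen, comm, and (for the late comm phases) wrap-around pulse subphases overlap with our window and collecting the inactive-neighbor probabilities recovers the stated closed form $C=1-1/M-\sum_{i=M_r}^{M-1}(1-q_w)^{N_i}/M$.

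Finally, because each neighbor's cycle phase is independent of its spatial location in the PPP, the activity and spatial factors multiply, giving $P_{fa}\approx C\cdot(1-\exp(-\pi\lambda_r'(r^\star)^2))$. Inverting this identity produces $(r^\star)^2=-\ln(1-P_{fa}/C)/(\pi\lambda_r')$, and substituting the expression for $(r^\star)^2$ in terms of $d_{rm}$ and $\sigma_{rcs}$ from the first step and isolating $d_{rm}$ yields the theorem's closed form. The hardest part will be the case analysis behind $C$: because the neighbor cycle has length $M=M_r/\epsilon$ while our listen window has length $M_r-1$, several phase ranges must be handled separately (in particular the wrap-around phases where the neighbor's next pulse cuts into our window), and collapsing this case sum into the compact summation with cutoff $N_i=\min(M_r-1,M-i)$ is the key technical step.
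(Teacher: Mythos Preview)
The paper does not actually supply a proof of this theorem: its entire argument is the sentence ``The proof is given in one of our previous works \cite{ping\_letter},'' with the derivation omitted for space. There is therefore nothing in the present paper to compare your sketch against line by line.

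That said, your decomposition is precisely the one the paper's setup signals. The paragraphs immediately preceding the theorem introduce the nearest-interferer approximation, the main-beam thinning to intensity $\lambda_r'=(\phi/2\pi)\lambda_r$, and the slot structure that drives the temporal factor; your plan to write $P_{fa}$ as a spatial contact-distance factor times a phase-averaged activity factor $C$, and then invert, is exactly the route the cited letter takes. Your identification $N_i=\min(M_r-1,M-i)$ agrees with the theorem's $N_i=\min(M_r-1,M_r/\epsilon-i)$ since $M=M_r/\epsilon$, and your remark that the phase case analysis (including wrap-around of the neighbor's next pulse into our listen window) is the delicate step is accurate.

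One caution on the final inversion. Your void-probability step gives $(r^\star)^2=-\ln(1-P_{fa}/C)/(\pi\lambda_r')$, and equating the interference at $r^\star$ to the echo at $d_{rm}$ via $\kappa P_{tx}G^2(r^\star)^{-4}=\kappa P_{tx}G^2\sigma_{rcs}d_{rm}^{-4}$ yields $(r^\star)^2=d_{rm}^2/\sqrt{\sigma_{rcs}}$, so that $d_{rm}$ should carry an explicit $\lambda_r'$ dependence. The displayed closed form in the theorem has no visible $\lambda_r'$ (nor $P_{tx}$ or $G$), so when you write this up do not force your algebra to match the printed constants; the display is inherited from the cited reference and may be carrying an implicit normalization that the present paper does not restate.
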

\begin{proof}:
The proof is given in one of our previous works \cite[pp.2]{ping_letter}. Due to space limitation of the paper, we have omitted it here.
\end{proof}
\vspace{-3mm}
Clearly, the radar performance is coupled with the transmitting probability ($q_w$) of communication nodes. In the subsequent subsection, we derive the expression for $q_w$ and analyze the communication network throughput. 

\subsection{Communication Performance}
In this subsection, we study the throughput of the nodes operating in communication mode. To access the shared communication medium between multiple JCAS nodes, a CSMA/CA-based channel access mechanism is used. The communication between transmitter and receiver is successful if the transmitter gets access to the channel and the signal-to-interference-plus-noise ratio (SINR) at the receiver is above a certain decoding threshold. Since the nodes are surrounded by both radar and communication nodes, the aggregated interference on the communication performance is non-negligible. The goal of this subsection is to quantify this interference and model the probability of successful reception. 

As explained above, each communication node transmits its data packet to a dedicated receiver located at $r_c$ distance apart. The position of the receiver is assumed to be uniformly distributed on a circle of radius $r_c$ centered around its transmitter node. For example, if a node $i$ has a transmitter located at $x_i\in \Phi_c$, then the position of its receiver is $y_i = x_i+r_cz(\theta_i)$, where $z(\theta)=(\cos(\theta),\sin(\theta))$ and $\theta$ is a uniformly distributed from $0$ to $2\pi$. Since each JCAS transmitter node communicates with the nearest receiver, no other JCAS node can be closer than $r_c$. In that case, the probability density function of $r_c$ is
\vspace{-1mm}
\begin{equation} \label{dist_pdf}
  f_c(r_c)=\lambda_c 2\pi r_c \exp(-\lambda_c \pi r_c)
   \vspace{-1mm}
\end{equation}
For the sake of tractability, we consider only the downlink scenario in communication mode.

\subsubsection{Channel model}
The transmit power of each JCAS node is assumed to be $P_{tx}$. We denote $l(d)$ as the path loss of the link between transmitter and receiver with distance $d$. The path loss model considered for our analysis is given by $l(d) = 20\log_{10}(\frac{4\pi c}{f_c}) + 10\alpha\log_{10}(d) $, where $c$ is the speed of light, $f_c$ denotes the operating frequency, and the path loss exponent $\alpha=3$. The small-scale fading between communication node transmitter $i$ and receiver is $h_{ii}^c$ and the received power at the receiver located at ${x}_i \in \Phi_c$ is $P_{rx}^0(x_i) = P_{tx}h_{ii}^c/l(\lVert x_i\rVert) $. Further, the small-scale fading between receiver $i$ and interfering communication node $j$ is denoted as $h_{ij}^c$ and between receiver $i$ and interfering radar transmitter $j$ is denoted as $h_{ij}^r$. Therefore, the interference power at the receiver $i$ from the communication node located at ${x}_j \in \Phi_c$ is $P_{rx}^c(x_j) = P_{tx}h_{ij}^c/l(\lVert x_j-x_i\rVert) $ and the interference power from the radar node located at $y \in \Phi_r$ is $P_{rx}^r(y) = P_{tx}h_{ij}^r/l(\lVert y-x_i\rVert) $. We assume that all channels experience Rayleigh fading with mean $\mu = 1$.

\subsubsection{Channel access mechanism}
Each JCAS node contends the shared channel using CSMA/CA-based protocol. In the case of CSMA/CA, each node senses the channel before the transmission. The channel is considered as \textit{busy} if it detects any other transmission within its sensing range. In our considered scenario, a JCAS node $i$ contending for the channel is surrounded by the JCAS nodes, which are either in radar mode or communication mode. Therefore, the node $i$ will consider the channel busy if any radar pulse or other communication signal that exceeds its carrier sensing threshold is detected. As soon as the node $i$ observes the channel \textit{idle}, it initiates its back-off timer, which defines the random amount of time slot the node waits before starting its transmission. We model the back-off timer of each node $i$ as an independent mark $t_i$, which is uniformly distributed in the interval $[0,1]$. The node gets access to the channel when it chooses the smaller $t_i$ value \cite{baccelli2010stochastic,jeff2016}.

Let $e_i$ denote the medium access indicator of node $i$ located at $x_i \in \Phi_c$, i.e. whether the node gets access to the channel or not. The value of $e_i$ is $1$ if the node $i$ is allowed to transmit and $0$ otherwise. Similar to the radar analysis, we expect that communication nodes will be most impacted by neighbouring radar nodes if and only if they are within the radar node's main lobe. Therefore, the set of interfering radar nodes ($\Phi_r^{'}$) is a thinning version on ($\Phi_r$) with intensity $\lambda_r^{'} = \lambda_r(\phi/2\pi)$. Further, following the analysis in \cite{baccelli2010stochastic}, we derive the expression for medium access indicator $e_i$ of node $i$ as,
\vspace{-0.24cm}
\begin{multline}
\label{eqn1}
e_i = \prod_{y \in \Phi_r } \mathbbm{1}_{P_{rx}^r(y)\leq P_{th}}\prod_{x_j \in \Phi_c^{'} } \big({\mathbbm{1}_{t_j^w\geq t_i^w}}\\+\mathbbm{1}_{t_j^w\leq t_i^w}\mathbbm{1}_{P_{rx}^c(x_j)\leq P_{th}}\big)
\end{multline}
where $\Phi_c^{'} = \Phi_c \setminus \{x_i\}$, $P_{rx}^r$ is the receiving power from the radar node, $P_{rx}^r$ is the receiving power from the communication node, and $P_{th}$ is the detection threshold for radar and communication signal. The first term signifies that the node will not transmit if it detects a radar pulse. The first part of the second term corresponds to the case when the back-off timer of $i$ is less than other nearby communication nodes, and it will transmit. The second part of the second term corresponds to the scenario where the back-off timer of $i$ is greater than the other node $j$, but the signal received from the node $j$ is lesser than the threshold $P_{th}$.

We are interested to determine the medium access probability (MAP) of node $i$, i.e. $\mathbf{P}(e_i=1)$. The MAP of node $i$ is dependent on its location with respect to other nodes (both radar and communication) and its back-off timer value. Further, the radar nodes operate in duty cycle pattern, i.e., it transmits its radar pulse for a faction $\eta$ of the time slot ($0<\eta\leq 1$) and waits for its echo for the rest of the duration. Since at a given time each radar node transmits independently with probability $\eta$, the radar nodes affecting the communication performance form a PPP with intensity $\eta\lambda_r$. Let the receiver of node $i$ is located at the origin, then the position of the transmitter is $x_i = (r_c,0)$. The expression for MAP, $q_w$, is, 
\begin{equation}
	\begin{aligned}
	&q_w =\mathbf{P}_0[e_i=1]\\
	&= \int_{0}^{\infty} \mathbb{E}[e_i|x_i=(r_c,0)]f_c(r_c) \,dr_c \\
	& \begin{multlined}\overset{\mathrm{(a)}}{=} \int_{0}^{\infty} \mathbb{E}\bigg[ \prod_{y \in \Phi_r } \mathbbm{1}_{P_{rx}^r(y)\leq P_{th}}\bigg]\mathbb{E}\bigg[\prod_{x_j \in \Phi_c^{'} } \big({\mathbbm{1}_{t_j^w\geq t_i^w}}\\+\mathbbm{1}_{t_j^w\leq t_i^w}\mathbbm{1}_{P_{rx}^c(x_j)\leq P_{th}}\big)\bigg]f_c(r_c) \,dr_c
	\end{multlined}\\
	&\overset{\mathrm{(b)}}{=} \int_{0}^{\infty}{\frac{1-\mbox{exp}(-K_c(r_c))}{K_c(r_c)}\mbox{exp}(-K_r)f_c(r_c) \,dr_c}
	\end{aligned}
 \vspace{-1mm}
\end{equation}
where $\mathbf{P}_0(\cdot)$ represents the Palm probability. Further, $K_r$ represents the expected number of radar nodes in $\mathbb{R}^2\setminus B(0,r_c)$ whose transmitted radar signal is above the detection threshold, where $B(0,r_c)$ is the closed ball with center $0$  and radius $r_c$, i.e., $K_r = \eta\lambda_r\int_{\mathbb{R}^2\setminus B(0,r_c)}\mbox{exp}(-\mu \frac{P_{th}}{P_{tx}}l(\lVert x \rVert) \,dx)$. Similarly, $K_c$ is the expected number of communication nodes in $\mathbb{R}^2\setminus B(0,r_c)$ region, i.e., $K_c(r_c) = \lambda_c\int_{\mathbb{R}^2\setminus B(0,r_c)}\mbox{exp}(-\mu \frac{P_{th}}{P_{tx}}l(\lVert x-r_c \rVert) \,dx)$. The (a) is obtained from the fact that $\Phi_r$ and $\Phi_c$ are independent PPPs, and (b) follows Slyvinak's theorem and probability generating functional (PGFL) of the PPP \cite{haenggi2012stochastic}.
\subsubsection{Probability of successful reception} 
The probability of successful reception ($P_s$) is defined as the instantaneous received SINR at the receiver of node $i$ is higher than the certain threshold $T$, given the medium access indicator ($e_i$) of its transmitter node is 1. Mathematically,
\begin{equation} \label{eqn6}
  P_s(T,\eta\lambda_r,\lambda_c)= \mathbf{P}(SINR_i \geq T | e_i = 1) 
\end{equation}
The received SINR of node $i$ is
\begin{equation}
    SINR_i = \frac{P_{rx}^0(x_i)}{\Gamma_i^r(y_j)+\Gamma_i^c(x_j)+\sigma_N^2}
\end{equation}
where $P_{rx}^0(\cdot)$ is the receiving power at receiver $i$ from its transmitter located at $x_i$. $\Gamma_i^r(\cdot)$ and $\Gamma_i^c(\cdot)$ denote the aggregated interference at node $i$'s receiver from surrounding radar and communication nodes, respectively. Finally, $\sigma_N^2$ denotes the noise power. The expressions for aggregated interference are
    \begin{equation}
	\begin{aligned}
	\Gamma_i^r(y) &= \sum_{y_j\in\Phi_r} P_{rx}^r(y_j) = \sum_{y_j\in\Phi_r}P_{tx}h_{ji}^r/l(\lVert y_j-x_i \rVert) \nonumber\\ 
	\Gamma_i^c(x_j) &= \sum_{x_j \in \Phi_c^{'}} P_{rx}^c(x_j) =\sum_{x_j \in \Phi_c^{'}}P_{tx}e_jh_{ji}^c/l(\lVert x_j-x_i \rVert) \nonumber
	\end{aligned}
	\end{equation}
where $e_j$ denotes the medium access indicator of node $j$. The final expression for $P_s$ for a given SINR threshold $T$ is
\begin{equation}
\begin{aligned}
    P_s &= \int_{0}^{\infty}\mbox{exp}\bigg(-\mu T l(r_c)\frac{\sigma_N^2}{P_{tx}}\bigg) \times\\  & \mbox{exp} \bigg(-\int_{\mathbb{R}^2}\frac{Tl(r_c)\eta\lambda_r}{l(\lVert x \rVert)+Tl(r_c)} \,dx\bigg) \times\\ & \mbox{exp}\bigg(\int_{\mathbb{R}^2\setminus B(0,r_c)}^{}{\frac{Tl(r_c)\lambda_w h_1(r_c,x)}{l(\lVert x \rVert)+Tl(r_c)}}\,dx\bigg)f_c(r_c) \,dr_c
\end{aligned}
\end{equation}
where $h_1(\cdot)$ denotes the probability that the other communication nodes transmit, given that the node $x_i$ transmits. The expression for $h_1(\cdot)$ is derived in \cite{jeff2016}. Notably, the probability of successful reception depends on the intensity of radar nodes ($\lambda_r$) which is a function of $\tau$, duty cycle of radar nodes ($\eta$), and the transmitting probability of other communication nodes ($h_1(\cdot)$).

Based on the fraction of time the JCAS node is in communication mode ($\epsilon$), the medium access probability ($q_w$), and the probability of successful reception ($P_s$), we can define the network throughput. The aggregated network throughput density is defined as the mean number of successful transmissions in a given area, which is given by:
\begin{equation} \label{eqn7}
\begin{aligned}
  \mathcal{T}=(1-\epsilon)\lambda_c q_w P_s
\end{aligned}
\end{equation}
\section{Performance Evaluation}
\subsection{Setup and Scenarios}
To verify the analytical model, we consider a JCAS network where the nodes are distributed following a homogeneous PPP. We consider two different simulation areas, $150\times150 m^2$ and $200\times200 m^2$, while keeping the network density the same. All the reported results are averaged over both the simulation areas. Further, we consider a random snapshot of time, where the nodes are either in communication or in radar mode. Each communication transmitter has one dedicated receiver, which is placed randomly at a distance of $20$ m from the transmitter. In the defined scenario, we vary the percentage of JCAS nodes working in radar mode ($\tau$), the fraction of slots dedicated to radar operation ($\epsilon$), and radar duty cycle ($\eta$) for different network densities. All simulation parameters are presented in Table \ref{sec:ScenariosandSetup}.
\begin{table}[h!] 
	\small
	\centering
	\caption{Simulation parameters}\label{sec:ScenariosandSetup}
	\renewcommand{\arraystretch}{1}
	{\begin{tabular}{|l|p{0.27\linewidth}|} \hline
			\textbf{Parameters} & \textbf{Values}\\ \hline 
			Simulation areas & $150 \times 150\mbox{ m}^2$ and $200 \times 200\mbox{ m}^2$\\ \hline
   			Network density, $\lambda$ (in nodes/${m^2}$) & $[10^{-6}- 1]$ \\ \hline
			Total no. of slots, $M$  & $ 100 $\\ \hline
			Transmit power, $P_{tx}$ & $ 23 $ dBm \\ \hline
			Mean value of radar cross-section, $\sigma_{rcs}$ & $ 10\mbox{ m}^2 $ \\ \hline
			Desired false alarm probability, $P_{fa}$ & $ 0.1 $\\ \hline
		    SINR threshold, $T$ & $ 5 $ dB\\ \hline
		    Detection threshold, $P_{th}$ & $ -62$ dBm\\ \hline
			Path loss exponent, $\alpha$ & $ 3 $ \\ \hline
			Operating frequency, $f_c$ & $6$ GHz\\ \hline
			Percentage of radar nodes, $\tau$ & $ 25\%, 50\%, 75\% $  \\ \hline
			Number of Seeds & $10^3$ \\ \hline
	\end{tabular}}
	\renewcommand{\arraystretch}{1}
	\vspace{-.4cm}
\end{table}

\subsection{Simulation Results and Discussion}
The communication and radar mode performances are analyzed based on average throughput and radar range, respectively. 
\begin{figure}[t!]
	\centering
	\includegraphics[width=0.45\textwidth,trim = 0mm 0.5mm 1mm 1mm,clip]{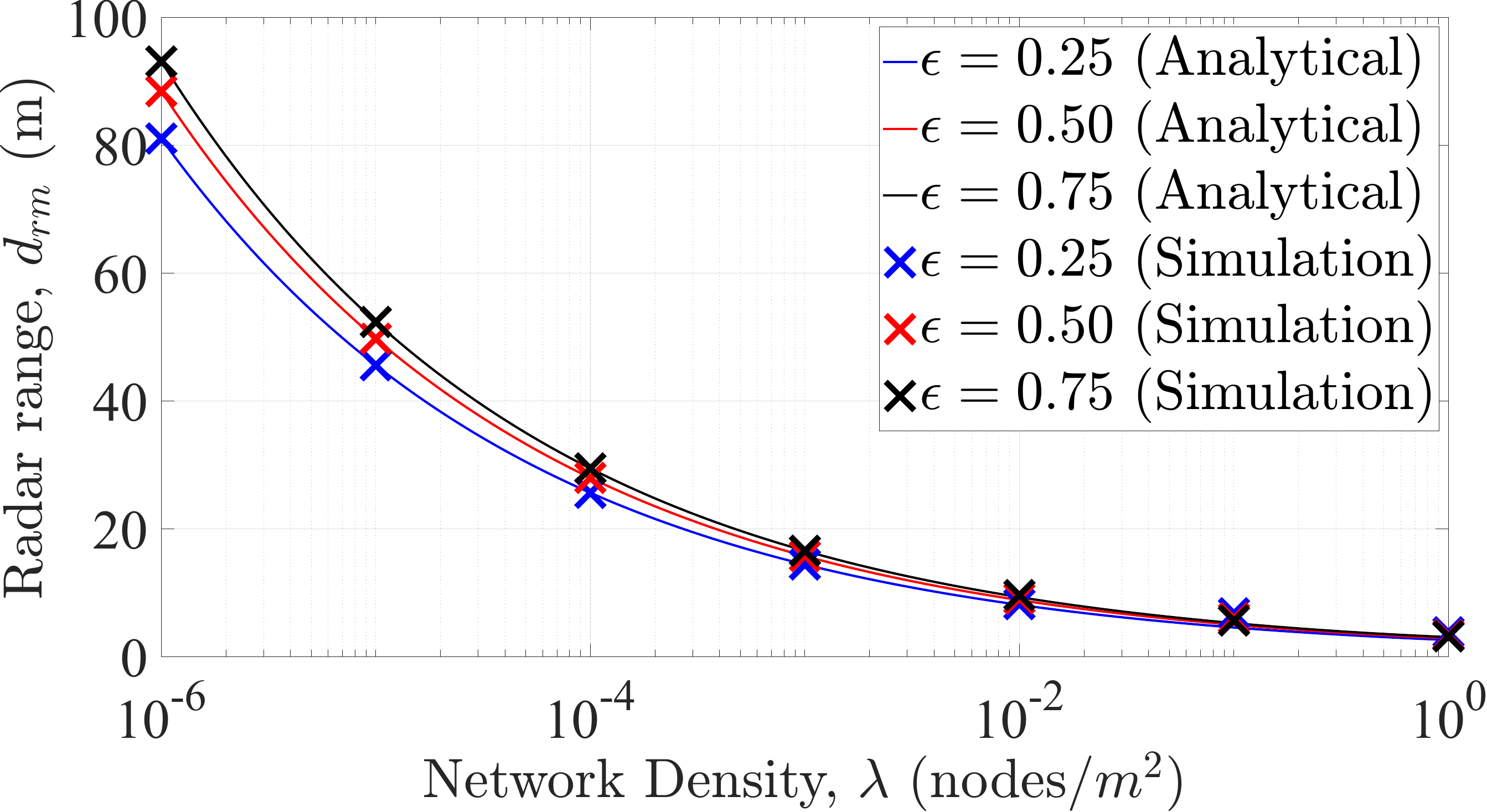}
	\caption{Radar range vs. network density for different $\epsilon$}
	\label{fig:radarrange}
	\vspace{-.45cm}
\end{figure}

Figure \ref{fig:radarrange} shows the radar range per total network node density $\lambda$ for different values of $\epsilon$. Here, $\tau=0.25$, whereas $\epsilon$ is changed to 0.25, 0.5, and 0.75. From the figure, it can be observed that as the number of nodes increases, the radar range decreases owing to increased interference from other radars. The reason for that is that all nodes use omni-directional antennas for radar operation. However, the radar range, even in dense network scenarios, is still suitable for low-range applications such as surveillance inside indoor residential areas. Further, it can be seen that the simulation results closely approximate to the analytical model. Moreover, the figure indicates that by dedicating more time to radar, i.e. $\epsilon$, the radar range increases. Lastly, it is worth mentioning that the radar range could be improved by either using directional antennas or by employing carrier sensing-based protocol for radars. 
\begin{figure}[h!]
    \vspace{-3mm}
	\centering
	\includegraphics[width=0.44\textwidth,trim = 0mm 1mm 0mm 0mm,clip]{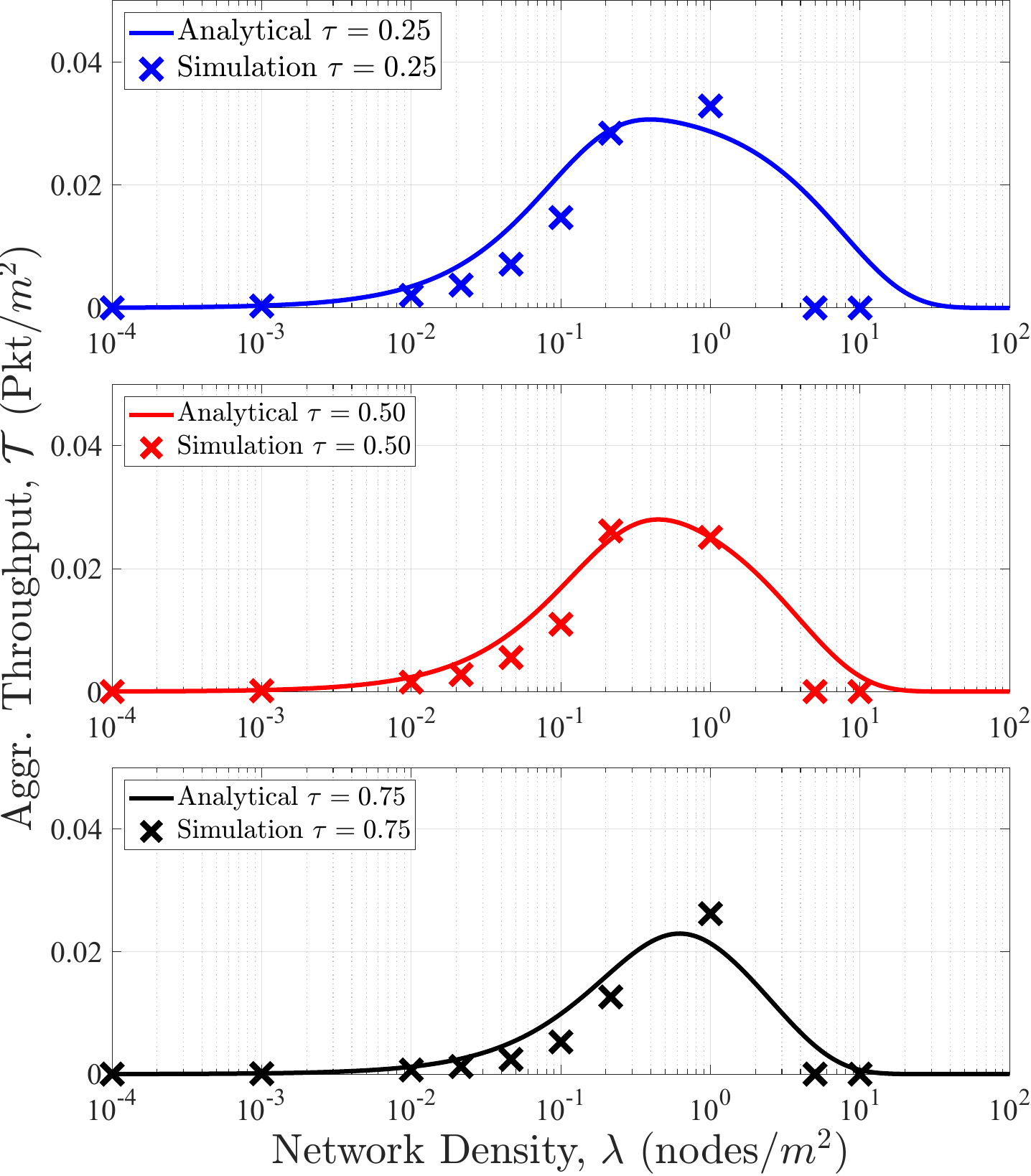}
	\caption{Throughput vs. network density for different $\tau$}
	\label{fig:Aggr_Throughput_tau}
	\vspace{-3mm}
\end{figure}

Figure \ref{fig:Aggr_Throughput_tau} depicts the aggregated network throughput per total network node density $\lambda$. Each subfigure represents the aggregated throughput for different $\tau$, i.e. percentage of nodes in radar mode. All curves indicate that the aggregate throughput of the network increases until it reaches a peak, followed by a reduction as network density increases. This is because with increasing the overall network density, the number of radar nodes increases, which results in higher radar-to-communication interference. In addition, we observe that the simulation results follow the same trend as the analytical one. The analytical results depict the upper bounds for the simulation results. Note that, for higher values of $\tau$, the discrepancy between the analytical and simulation results is mainly due to the impact of radar interference in a finite area (simulation) than in the case of an infinite area (stochastic geometry-based analytical model).
\begin{figure}[h!]
	\centering
	\includegraphics[width=0.44\textwidth,trim = 0mm 2.2mm 0mm 0.4mm,clip]{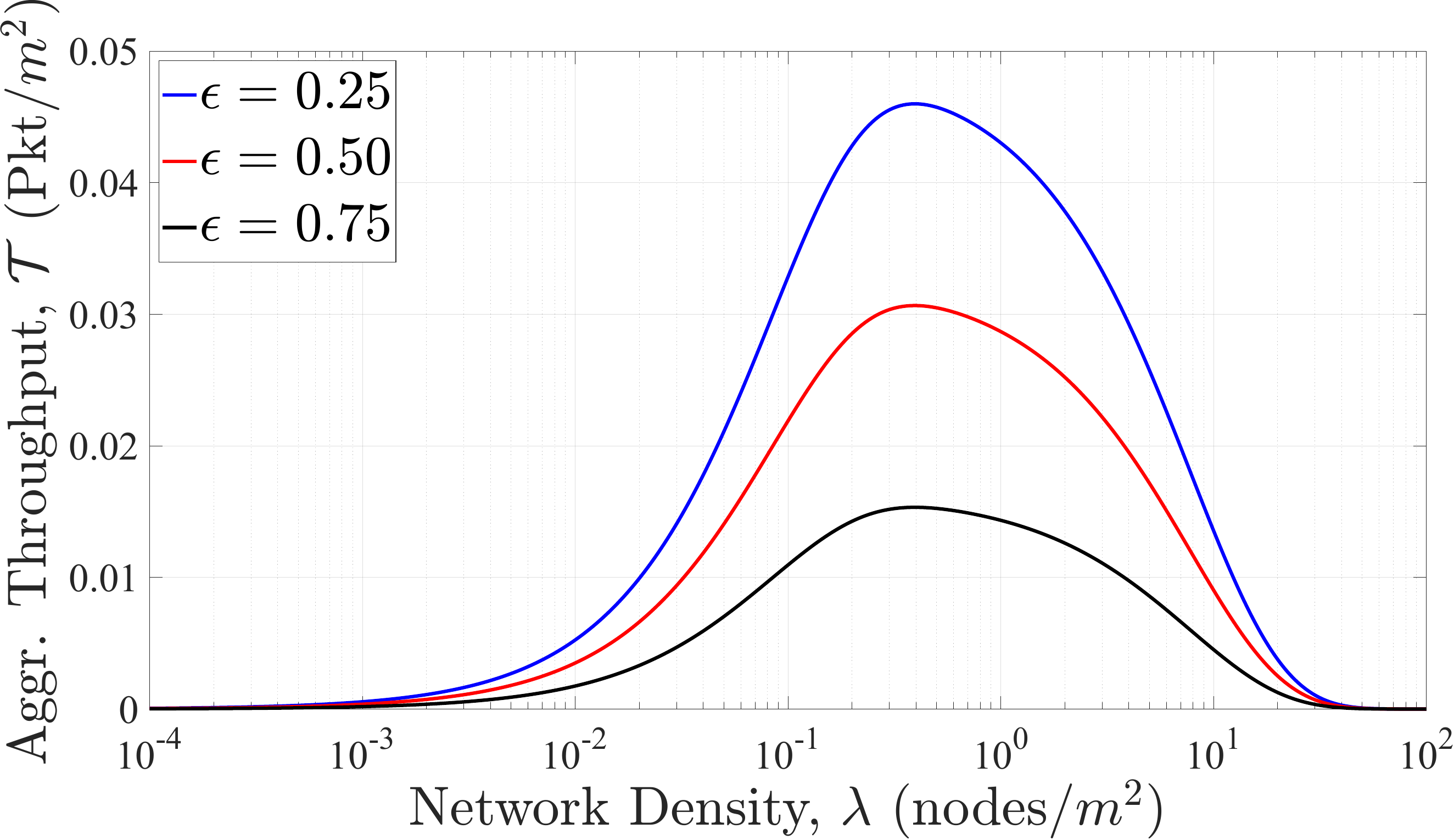}
	\caption{Throughput vs. network density for different $\epsilon$}
	\label{fig:Aggr_Throughput_epsilon}
	\vspace{-0.25cm}
\end{figure}

By analyzing the throughput curves in all three subfigures for different $\tau$, we observe that the throughput decreases by increasing the fraction of nodes in the radar modes. Since the nodes in the radar modes follow duty cycle transmission without carrier sensing, there will be increased interference by the radar nodes. According to the first subfigure, the maximum aggregated throughput can be reached at the node density of $0.5$, and the area under the throughput curves decreases with the increasing value of $\tau$. This is because, with the increase in the percentage of nodes in the radar mode ($\tau$), nodes in communication mode have to defer transmission. As a result, the throughput peak is achievable at the lower node density.

Figure \ref{fig:Aggr_Throughput_epsilon} shows the aggregated network throughput per network node density $\lambda$ for different values of $\epsilon$. The throughput curves are plotted for epsilon of 0.25, 0.5 and 0.75 for a fixed value of $\tau=0.25$. Similar to Figure \ref{fig:radarrange}, we observe that the throughput rises for each curve initially to reach a peak but then decreases with higher network density. By comparing the throughput curves for different $\epsilon$, we see that for the higher value of $\epsilon$ lower throughput can be reached. The reason is that since the time dedicated to radar operation increases, less time is dedicated to communication, leading to lower network throughput. To this end, we can infer that to serve throughput-hungry applications, JCAS nodes may assign lesser slots for radar operation.
\begin{figure}[t!]
\vspace{-0.15cm}
	\centering
	\includegraphics[width=0.43\textwidth,trim = 0mm 2.2mm 0mm 0.50mm,clip]{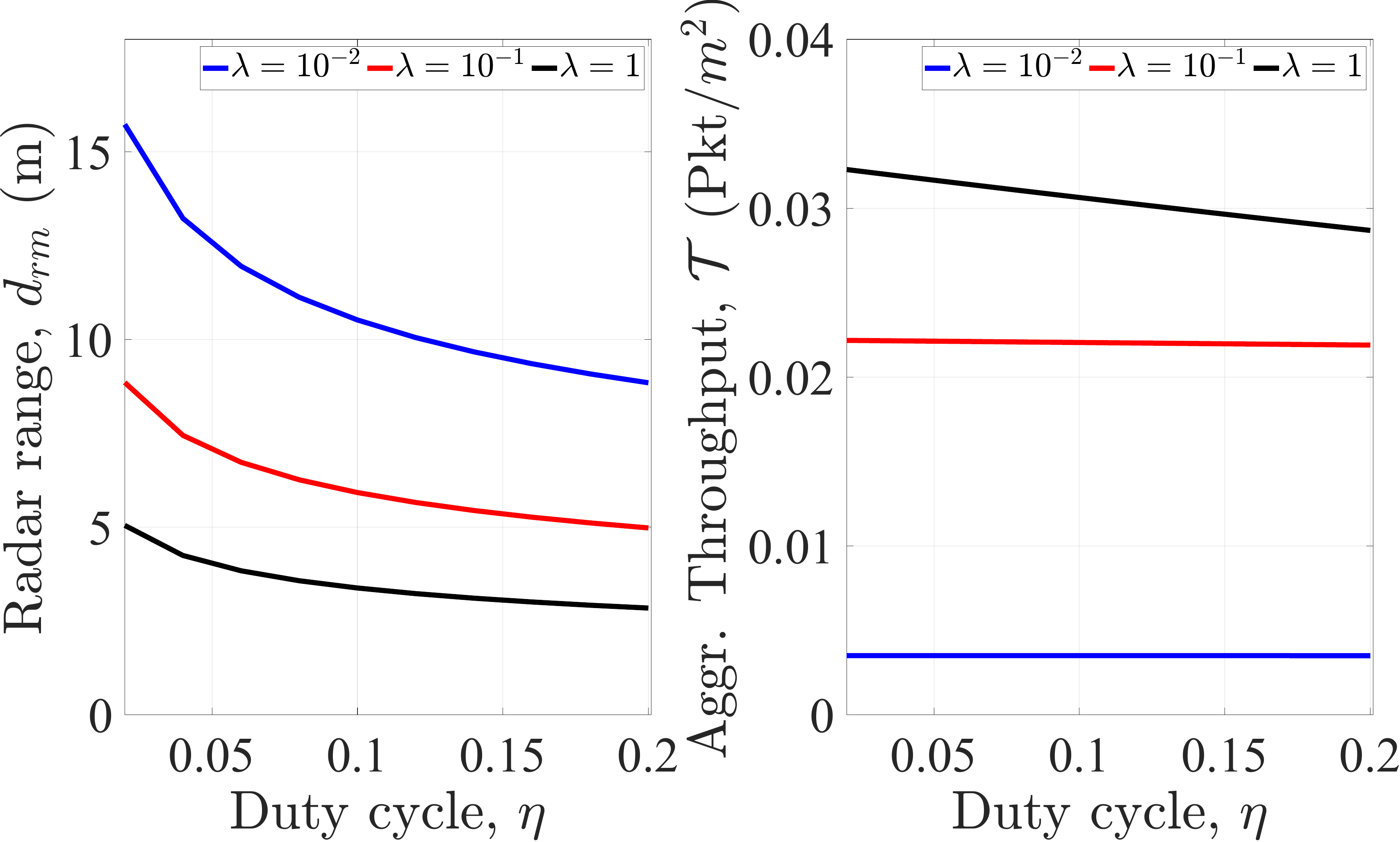}
	\caption{Effect of duty cycle }
	\label{fig:radarrange_th_dc}
	\vspace{-0.5cm}
\end{figure}

In Figure \ref{fig:radarrange_th_dc}, we show the relationship between the radar range and aggregated throughput with varying duty cycles. We show the results for $\lambda$ of $10^{-2}$, $10^{-1}$, and $1$. In the left subfigure, we observe that with the increasing radar duty cycle, the radar range decreases. This is because with a higher duty cycle, the radar-to-radar interference increases. In addition, each radar has a lesser waiting time for their echo \cite{tait2005}. Furthermore, with higher network density, radar performance in terms of range is also severely affected as discussed in Figure \ref{fig:radarrange}. In the right subfigure, we depict the effect of the duty cycle on the aggregated network throughput. It shows that for sparse networks ($\lambda=10^{-2}$), the effect of the duty cycle is negligible. However, for denser network scenario ($\lambda=1$), aggregated throughput decrease for higher duty cycle due to higher radar-to-communication interference. Finally, comparing both subfigures, we infer that there is a trade-off exists between the radar and communication performance in terms of duty cycle. By giving higher precedence to sensing, the communication performance decreases. Therefore, from a network management perspective, the JCAS network will achieve its greatest potential if the duty cycle of each node is tuned and managed in a suitable way to adapt and optimize the joint sensing and communication performance.

\section{Conclusions}
In this paper, a statistical geometry approach is used to model a CSMA-based JCAS network for sub-7 GHz with time-sharing sensing and radar functionalities. We verified the analytical model through simulations, and studied how different network parameters, affected both the communication and radar performance. Our analysis reveal key performance trade-offs between the radar detection range and the communication throughput.  While the radar detection range suffers from increased interference, the system was found to work well for short-range detection applications, even for dense network scenarios. Moreover, to serve throughput-intensive applications, JCAS nodes should assign less time for radar operation.

\small
\section*{Acknowledgment}
This work was partially funded by the Federal Ministry of Education and Research Germany within the project “Open6GHub” under grant 16KISK012 and partially funded under the Excellence Strategy of the Federal Government and the Länder. Simulations were performed with computing resources granted by RWTH Aachen University under project rwth0767.
\bibliographystyle{IEEEtran}
\bibliography{conference_101719}
\end{document}